\documentclass[lettersize,journal]{IEEEtran}
\usepackage{amsmath,amsfonts,amssymb}

\DeclareMathOperator{\diag}{diag}
\DeclareMathOperator{\ReOp}{Re}
\DeclareMathOperator{\ImOp}{Im}
\newcommand{\T}{\mathsf{T}}
\usepackage{algorithmic}
\usepackage{array}
\usepackage[caption=false,font=normalsize,labelfont=sf,textfont=sf]{subfig}
\usepackage{textcomp}
\usepackage{stfloats}
\usepackage{url}
\usepackage{verbatim}
\usepackage{graphicx}
\usepackage{bm}  
\usepackage{amsthm}   
\usepackage{cite}     

\usepackage{multirow} 
\usepackage{booktabs} 

\graphicspath{{figure/}}
\newtheorem{proposition}{Proposition}

\def\BibTeX{{\rm B\kern-.05em{\sc i\kern-.025em b}\kern-.08em
    T\kern-.1667em\lower.7ex\hbox{E}\kern-.125emX}}
\usepackage{balance}
\begin{document}
\title{Localization and Reshaping of Non-Minimum-Phase Zeros in Multi-Converter Systems}
\author{Ailixier~Yaermaimaiti,
        Jiaxin~Wang,~\IEEEmembership{Graduate~Student~Member,~IEEE},
        Yunjie~Gu,~\IEEEmembership{Senior~Member,~IEEE},
        and~Huanhai~Xin,~\IEEEmembership{Senior~Member,~IEEE},%
\thanks{This work was supported by the National Natural Science Foundation of China (U24B6008 and U22B6008).}%
\thanks{Ailixier Yaermaimaiti and Huanhai Xin are with the College of Electrical Engineering, Zhejiang University, Hangzhou 310027, China (e-mail: alxr@zju.edu.cn; xinhh@zju.edu.cn).}%
\thanks{Jiaxin~Wang is with the State Key Laboratory of Power System Operation and Control, Department of Electrical Engineering, Tsinghua University, Beijing 100084, China (e-mail: jiaxinwangthu@gmail.com).}%
\thanks{Yunjie Gu is with the Department of Electrical and Electronic Engineering, Imperial College London, London SW7 2AZ, U.K. (e-mail: yunjie.gu@imperial.ac.uk).}}

\maketitle

\begin{abstract}
Non-minimum-phase (NMP) zeros in multi-converter power systems impose bandwidth ceilings on feedback control, yet quantifying them at the system level has been impractical because commercial converters withhold their internal controller models. This paper develops a Jacobian-based framework that decouples the NMP zeros from individual converter dynamics, proves them to be strictly real, and expresses their values as the singular values of a matrix constructed solely from the grid admittance matrix and steady-state power injections. Because these zeros govern the peak magnitude of the complementary sensitivity function, an exponential lower bound on this peak is derived as a function of the dominant zero, establishing that as the zero approaches the origin the stability margin degrades unavoidably. To counteract this degradation, a zero reshaping strategy is proposed that ranks converter nodes by their real participation factors and identifies the optimal site for voltage droop deployment without iterative search, steering the dominant zero away from the origin and thereby suppressing the sensitivity peak.
\end{abstract}

\begin{IEEEkeywords}
Inverter-based resources, non-minimum-phase zero, stability, complementary sensitivity function.
\end{IEEEkeywords}

\section{Introduction}
\label{sec_intro}

\IEEEPARstart{V}{oltage} source converters play a vital role in modern power systems by connecting renewable energy sources, energy storage systems, and high-voltage direct current transmission networks to the grid \cite{Blaabjerg2004,lhlzac2018,Pcatalan2023,ZSui2025}. Their growing penetration, however, weakens the grid by reducing short-circuit ratios and introduces new small-signal stability problems \cite{Hatziargyriou2021,JZzhou2014}. A well-known consequence is the appearance of non-minimum-phase (NMP) zeros in the open-loop transfer functions of grid-tied converters \cite{Zhang2010,FanMiao2018}. These zeros reflect a physical coupling between the converter control inputs and the grid voltage, and they impose hard constraints on the achievable control bandwidth that no feedback design can remove \cite{Skogestad2007,JF1985}. In weak grids, where these zeros shift toward the origin, the resulting bandwidth ceiling becomes the dominant constraint on converter control tuning.

Bode integral theory explains why NMP zeros enforce this bandwidth ceiling \cite{Jiechen2000,Nwan2020}. An NMP zero creates a tradeoff between the control bandwidth and the peak magnitude of the complementary sensitivity function. Pushing the bandwidth toward the magnitude of the NMP zero forces this peak to grow exponentially. Because a large peak reduces the system's tolerance to modeling uncertainties \cite{JDoyle1981,NLehtomaki1981,IPostlethwaite1981}, the bandwidth must be restricted to preserve an adequate stability margin. In practice, the control bandwidth of the converter must remain below approximately half the magnitude of the NMP zero closest to the origin \cite{Wu2021,JinDai2025}.

In single-converter systems, the causes of NMP zeros are well understood. One primary cause is the coupling between phase-angle perturbations and $d$-axis current dynamics. When a converter injects active power into a weak grid, this coupling produces an NMP zero in the open-loop transfer function, and the zero moves toward the origin as the short-circuit ratio decreases \cite{Zhang2010,FanMiao2018,ZhangL2011}. Another major cause involves the $q$-axis current control loop. Here, the grid impedance couples the $q$-axis current to the bus voltage, producing an additional NMP zero that becomes prominent during reactive power exchange under low short-circuit ratio conditions \cite{Yin2023}. In both cases, the resulting bandwidth ceiling severely constrains the dynamic performance and stability margin of the converter \cite{Wu2021,JinDai2025}. However, these single-converter analyses isolate the device from broader network interactions. When multiple converters share the same grid, the NMP zeros are no longer determined by an individual device. Instead, they emerge directly from the physical coupling across the multi-converter system.

In multi-converter systems, quantifying and reshaping NMP zeros presents three major challenges. First, the NMP zeros exhibit a global dependence on the network admittance matrix, nodal voltage phasors, and power injections. A change in the operating point of one converter can shift a zero at a distant bus because the network couples all nodes together \cite{Huang2024,Wdong2019}. This global nature stands in contrast to the single-converter case where zeros are attributed to specific local control loops. Second, commercial converters are typically proprietary. System operators only have access to terminal impedance or frequency-response data and lack the internal state-space models required for a system-level zero analysis \cite{Jfang2025}. Without these detailed models, computing the NMP zeros of the overall system becomes highly restricted. Finally, there is an absence of a systematic method to reshape NMP zero trajectories. Because pushing the critical NMP zero further away from the origin directly suppresses the exponential growth of the complementary sensitivity peak, such a rightward shift is essential to restore the multivariable stability margin. However, even when these zeros are identified, no clear criterion exists to determine which converter requires modification or what local control action maximizes the rightward shift of the critical zero.

To overcome these challenges, this paper develops a Jacobian-based framework for quantifying and reshaping NMP zeros in multi-converter systems. We model the system as a multiple-input-multiple-output (MIMO) feedback interconnection of a network Jacobian plant and device-side controllers, and use the peak magnitude of the complementary sensitivity function as the stability margin metric. From Bode integral theory, we analytically characterize how the dominant NMP zero dictates this peak. We then compute the NMP zeros in closed form through a similarity transformation and Schur complement decomposition of the network Jacobian, reducing the problem to a singular value computation on a matrix built from network admittance data and steady-state power injections. To reshape these zeros, we introduce voltage droop control at selected converter nodes and derive a sensitivity formula that maps the most effective placement to a ranking of nodal participation factors. The main contributions are summarized below.

1) We derive a lower bound on the peak magnitude of the complementary sensitivity function as a function of the dominant zero, revealing a fundamental constraint that explicitly links zero proximity to the origin with the degree of stability margin degradation.

2) We prove that the NMP zeros of a multi-converter system originate exclusively from the network Jacobian and are independent of individual converter controller parameters. This result bypasses the black-box barrier of commercial converters and yields the expression for the NMP zero locations in terms of the singular values of a matrix constructed from the grid admittance and nodal power injections alone.

3) We propose an NMP zero reshaping strategy that identifies the most effective converter node for control deployment by proving that the zero sensitivity to a local voltage droop gain is proportional to the real participation factor at each node, thereby reducing the multi-node control placement problem to a scalar ranking with no iterative search required.

The rest of this paper is organized as follows. Section~\ref{sec:modeling} establishes the multi-converter system model based on the network and device Jacobians. Sections~\ref{sec:sensitivity_constraints} and~\ref{sec:stability_analysis} derive the stability bounds imposed by NMP zeros and quantify these zeros using network data. Section~\ref{sec:reshaping} introduces the NMP zero reshaping strategy for stability enhancement. Simulation results and conclusions follow in Sections~\ref{sec:case_study} and~\ref{sec:conclusion}.

\section{Jacobian-Based Modeling of Multi-Converter Systems}
\label{sec:modeling}

Consider a multi-converter system where $N$ converters are connected through a power network as shown in Fig.~\ref{fig:dj}. The system consists of grid-following (GFL) converters that use Phase-Locked Loops (PLLs) for synchronization. As illustrated in Fig.~\ref{fig:kz}, the control structure consists of outer power loops, inner current loops, and the PLLs.

\begin{figure}[!t]
   \centering
   \includegraphics[width=3.0in]{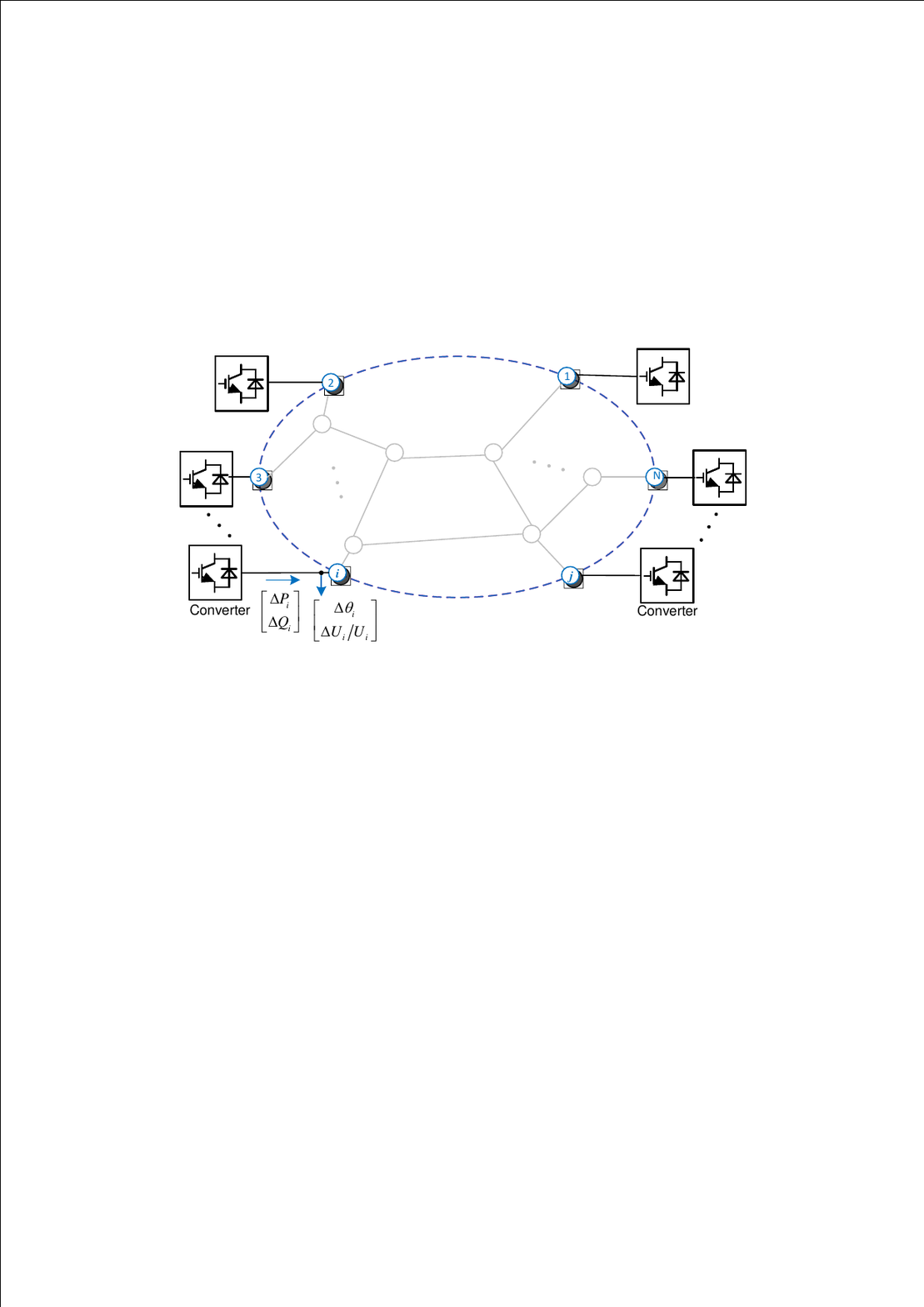}
    \caption{Illustration of a multi-converter system.}
    \label{fig:dj}
\end{figure}

\begin{figure}[!t]
   \centering
   \includegraphics[width=3.0in]{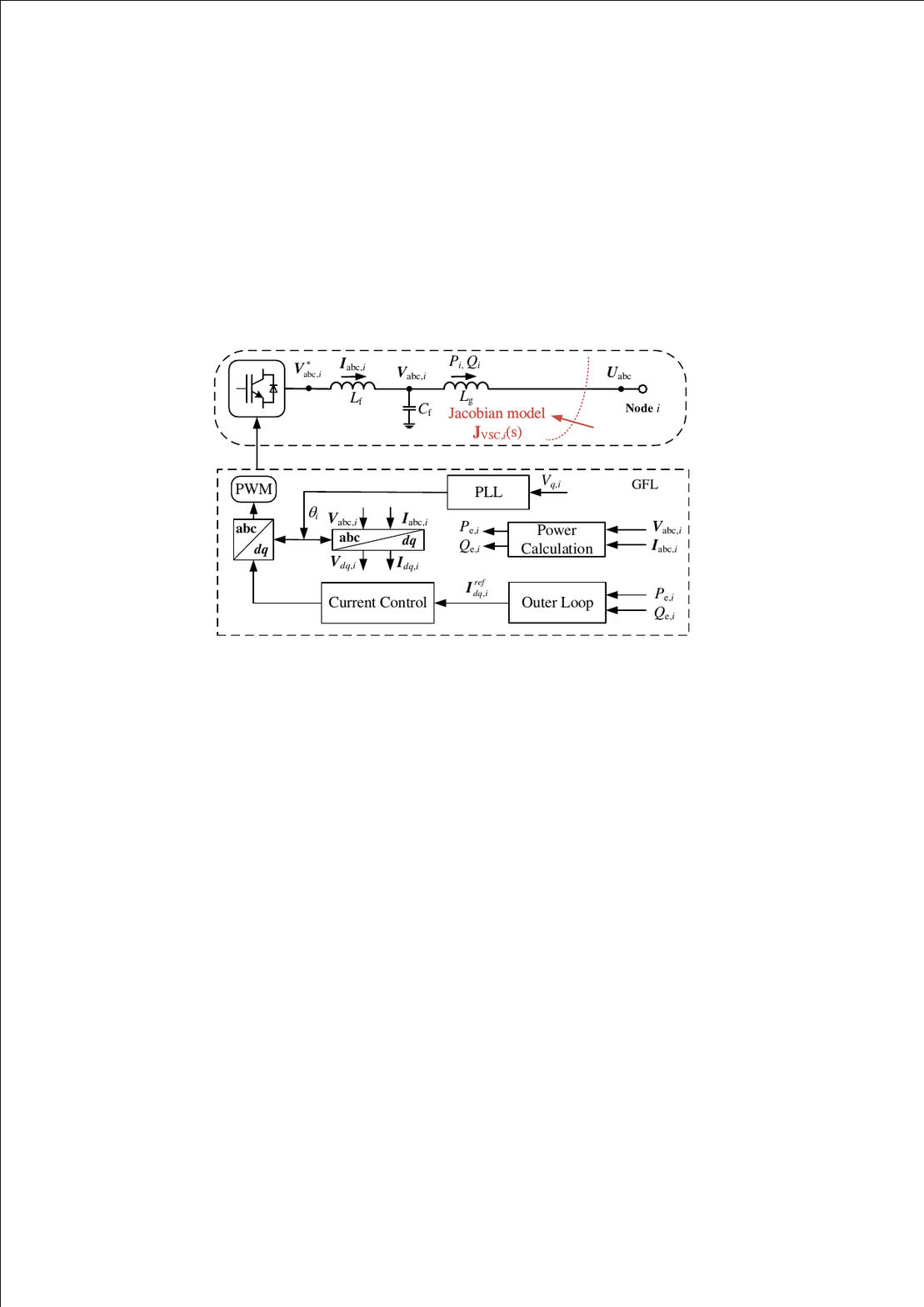}
    \caption{Control structure of a single grid-connected GFL converter.}
    \label{fig:kz}
\end{figure}

For the $i$-th converter, the small-signal dynamics are represented by a Jacobian transfer matrix $\mathbf{J}_{\mathrm{VSC},i}(s)$. This matrix characterizes the linearized relationship between the power perturbations $[\Delta P_i, \Delta Q_i]^\T$ and the voltage coordinates $[\Delta \theta_i, \Delta U_i/U_{i}]^\T$ as
\begin{equation}
\label{eq:single_device}
\begin{aligned}[b]
\begin{bmatrix}
\Delta P_i \\
\Delta Q_i
\end{bmatrix}
&=
\mathbf{J}_{\mathrm{VSC},i}(s)
\begin{bmatrix}
\Delta \theta_i \\
\Delta U_i/U_{i}
\end{bmatrix}, \\[2ex]
\mathbf{J}_{\mathrm{VSC},i}(s) &=
\begin{bmatrix}
J_{P\theta,i}(s) & J_{PU,i}(s) \\
J_{Q\theta,i}(s) & J_{QU,i}(s)
\end{bmatrix}.
\end{aligned}
\end{equation}

The Jacobian transfer matrix $\mathbf{J}_{\mathrm{VSC},i}(s)$ is obtained through analytical modeling or frequency-scanning measurements. We refer to \cite{ZhangL2011} for derivations of $\mathbf{J}_{\mathrm{VSC},i}(s)$. This formulation can also be derived from the impedance/admittance model \cite{Zyang2020}.

We extend \eqref{eq:single_device} to all $N$ converters. A device-side Jacobian matrix $\mathbf{J}_{\mathrm{VSC}}(s)$ is constructed using the Kronecker product to incorporate the capacity scaling matrix $\mathbf{S}_B = \diag\{S_{B1}, \dots, S_{BN}\}$. The aggregated dynamics are expressed as
\begin{equation}
\label{eq:device_agg}
\begin{gathered}
\begin{bmatrix}
\Delta \bm{P} \\
\Delta \bm{Q}
\end{bmatrix}
= \mathbf{J}_{\mathrm{VSC}}(s)
\begin{bmatrix}
\Delta \mathit{\Theta} \\
 \bm{U}^{-1}\Delta \bm{U}
\end{bmatrix}, \\[2ex]
\mathbf{J}_{\mathrm{VSC}}(s) = (\mathbf{I}_2 \otimes \mathbf{S}_B)
\begin{bmatrix}
\mathbf{J}_{P\theta}(s) & \mathbf{J}_{PU}(s) \\
\mathbf{J}_{Q\theta}(s) & \mathbf{J}_{QU}(s)
\end{bmatrix}
\end{gathered}
\end{equation}
where $\bm{P}=[P_1, \dots, P_N]^\T$ and $\bm{Q}=[Q_1, \dots, Q_N]^\T$ are the active and reactive power vectors, respectively; $\mathit{\Theta}=[\theta_1, \dots, \theta_N]^\T$ and $\bm{U}=[U_1, \dots, U_N]^\T$ denote the phase angle and voltage magnitude vectors of the $N$ converters, respectively. $\mathbf{J}_{P\theta}(s)=\diag\{J_{P\theta,1}(s),\dots,J_{P\theta,N}(s)\}$ is the diagonal matrix formed by the corresponding Jacobian elements of each converter. Similar diagonal matrices are defined for $\mathbf{J}_{PU}(s)$, $\mathbf{J}_{Q\theta}(s)$, and $\mathbf{J}_{QU}(s)$.

The network-side Jacobian transfer matrix for the multi-converter system is derived based on a reduced network model. The aggregated network dynamics are formulated as
\begin{equation}
\label{eq:net_jac}
\begin{gathered}
\begin{bmatrix}
\Delta \bm{P} \\
\Delta \bm{Q}
\end{bmatrix}
= \mathbf{J}_{\mathrm{NET}}(s)
\begin{bmatrix}
\Delta \mathit{\Theta} \\
 \bm{U}^{-1}\Delta \bm{U}
\end{bmatrix}, \\[2ex]
\begin{aligned}
\mathbf{J}_{\mathrm{NET}}(s) &=
\begin{bmatrix}
  \alpha(s) & \beta(s) \\
 -\beta(s) & \alpha(s)
\end{bmatrix} \otimes \ReOp(\mathbf{Y}) \\
&\quad +
\begin{bmatrix}
  \beta(s) & -\alpha(s) \\
  \alpha(s) & \beta(s)
\end{bmatrix} \otimes \ImOp(\mathbf{Y})
+ \begin{bmatrix}
 -\mathbf{Q} & \mathbf{P} \\ \mathbf{P} & \mathbf{Q}
\end{bmatrix}.
\end{aligned}
\end{gathered}
\end{equation}

To simplify the network topology, interior nodes not directly connected to the converters are eliminated via Kron reduction under the assumption that the currents injected into these intermediate nodes remain constant \cite{Fdorfler}. This reduction yields the reduced Laplacian matrix $\mathbf{B}_r$, where the off-diagonal $ij$-th element encapsulates the equivalent susceptance between converter node $i$ and node $j$. For derivations regarding the construction of $\mathbf{B}_r$, we refer to \cite{Wdong2019}.

Based on this reduced network, the grid complex susceptance matrix is defined as $\mathbf{Y} = \mathbf{U}\mathbf{B}_r \overline{\mathbf{U}}$, where the overbar indicates the element-wise complex conjugate, while the operators $\ReOp(\cdot)$ and $\ImOp(\cdot)$ extract its real and imaginary parts, respectively. $\mathbf{U}=\diag\left \{ U_1e^{j\theta_1},\dots , U_Ne^{j\theta_N} \right \}$ represents the diagonal matrix formed by the complex voltage phasors of the converter nodes. $\mathbf{P} = \diag\{P_1, \dots, P_N\}$ and $\mathbf{Q} = \diag\{Q_1, \dots, Q_N\}$ are defined as diagonal matrices containing the active and reactive powers of the respective converters. In addition, $\alpha(s)=\frac{1}{(\omega_0/s)^2+1}$ and $\beta(s)=\frac{\omega_0/s}{(\omega_0/s)^2+1}$, where $\omega_0$ is the nominal angular frequency of the system. The symbol $\otimes$ denotes the Kronecker product. The mathematical derivation of the network Jacobian matrix is provided in Appendix~\ref{app:jacobian}.

To establish the closed-loop feedback interconnection depicted in Fig.~\ref{fig:fd}, the device cluster is modeled as the forward path controller. This requires inverting the aggregated relationship in \eqref{eq:device_agg} to express voltage perturbations as a function of power variations, so the controller $\mathbf{K}_{\mathrm{VSC}}(s)$ is defined as the inverse of the device-side Jacobian matrix
\begin{equation}
\label{eq:K_inv}
\mathbf{K}_{\mathrm{VSC}}(s) = \mathbf{J}_{\mathrm{VSC}}^{-1}(s).
\end{equation}

Combining the network model $\mathbf{J}_{\mathrm{NET}}(s)$ with the inverted device model $\mathbf{K}_{\mathrm{VSC}}(s)$ yields the equivalent feedback block diagram shown in Fig.~\ref{fig:fd}.

\begin{figure}[!t]
   \centering
    \includegraphics[width=3.0in]{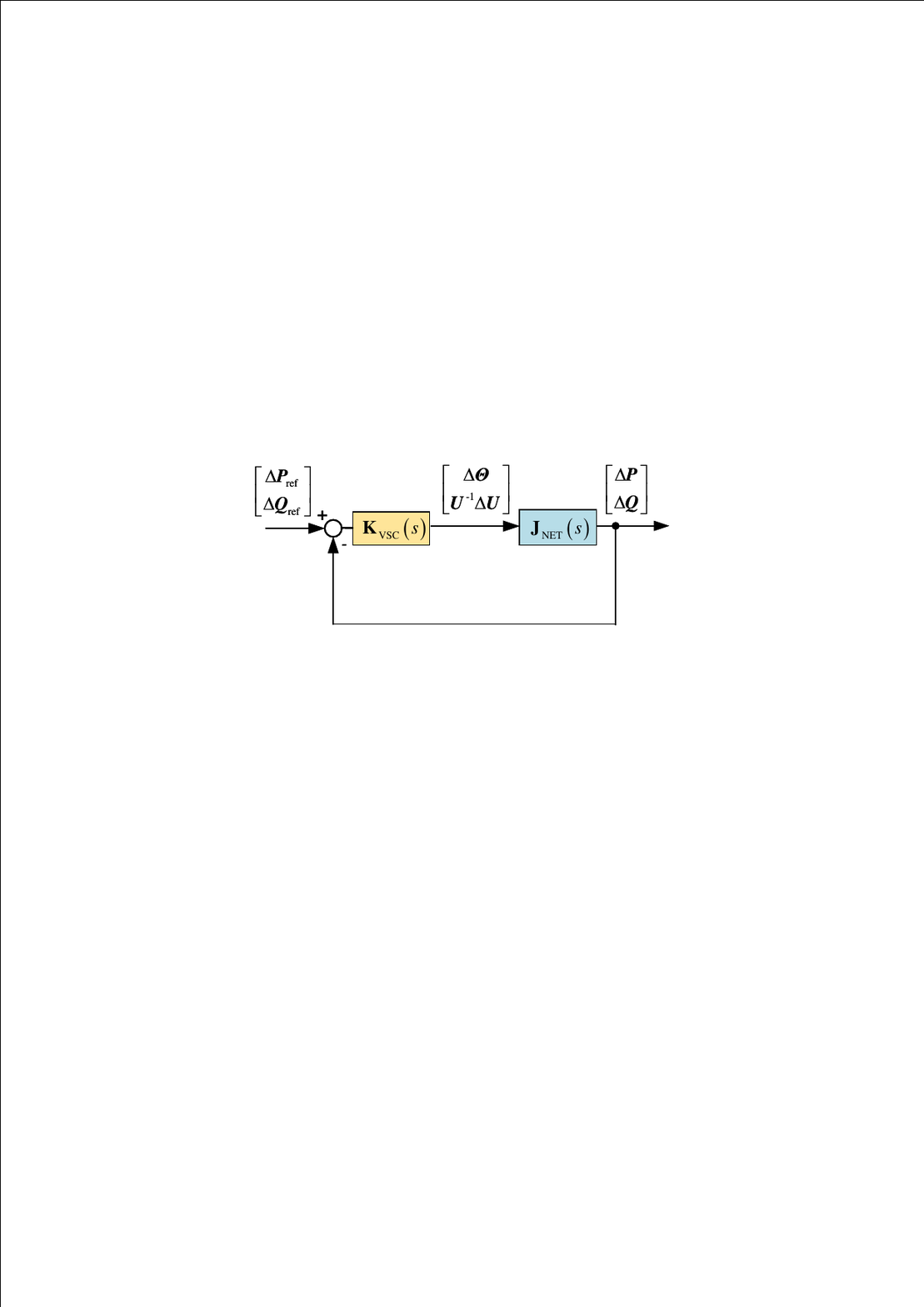}
    \caption{Equivalent feedback block diagram of the Jacobian model.}
    \label{fig:fd}
\end{figure}

\section{Complementary Sensitivity Function Constraints Imposed by NMP Zeros}
\label{sec:sensitivity_constraints}

\subsection{Bode Integral Constraint on the Complementary Sensitivity Function}

As shown in Fig.~\ref{fig:fd}, the open-loop transfer function of the multi-converter system is $\mathbf{L}(s) = \mathbf{J}_{\mathrm{NET}}(s)\mathbf{K}_{\mathrm{VSC}}(s)$. A measure of the system's stability margin in the presence of modeling errors is the complementary sensitivity matrix $\mathbf{T}(s)$, expressed as
\begin{equation}\label{eq:complementary_sensitivity}
    \mathbf{T}(s) = \mathbf{L}(s)(\mathbf{I} + \mathbf{L}(s))^{-1}.
\end{equation}

According to the Small Gain Theorem \cite{Skogestad2007}, a smaller peak magnitude of the maximum singular value $\bar{\sigma}(\mathbf{T}(j\omega))$ yields a larger tolerance to uncertainties, serving as a robust measure of stability. Conversely, a lower tolerance to uncertainties degrades the stability margin and renders the system susceptible to instability.

However, shaping $\mathbf{T}(s)$ is constrained by the plant's NMP characteristics. According to Bode integral theory \cite{Jiechen2000}, NMP zeros impose an integral constraint on the complementary sensitivity function
\begin{equation} \label{eq:integral_constraint}
    \int_{0}^{\infty} \ln \bar{\sigma}(\mathbf{T}(j\omega)\mathbf{T}^{-1}(0)) \frac{d\omega}{\omega^2} \ge \frac{\pi}{2} \bar{\lambda} \left( \sum_{i=1}^{l} \frac{2\ReOp(z_i)}{|z_i|^2} \mathbf{w}_i \mathbf{w}_i^{\mathrm{H}} + \mathbf{C} \right)
\end{equation}
where $z_i$ denotes the $i$-th NMP zero of the system, defined as the roots of $\det(\mathbf{J}_{\mathrm{NET}}(z_i)) = 0$ with $\ReOp(z_i) > 0$. The vector $\mathbf{w}_i$ is the corresponding unit output zero direction satisfying $\mathbf{w}_i^{\mathrm{H}} \mathbf{J}_{\mathrm{NET}}(z_i) = 0$ and $\|\mathbf{w}_i\| = 1$. The constant matrix $\mathbf{C} = \frac{1}{2}(\mathbf{T}'(0)\mathbf{T}^{-1}(0) + (\mathbf{T}'(0)\mathbf{T}^{-1}(0))^{\mathrm{H}})$ reflects the low-frequency tracking performance. Note that this constraint assumes the controller $\mathbf{K}_{\mathrm{VSC}}(s)$ is stable and does not introduce unstable pole-zero cancellations.

From \eqref{eq:integral_constraint}, if an NMP zero $z_i$ has a small magnitude, the lower bound of the integral becomes large. To satisfy this constraint, the maximum singular value $\bar{\sigma}(\mathbf{T}(j\omega))$ increases over intermediate frequency ranges. The peak magnitude of the complementary sensitivity matrix, $M_T = \|\mathbf{T}(s)\|_\infty = \sup_\omega \bar{\sigma}(\mathbf{T}(j\omega))$, therefore serves as an indicator of system stability. A high $M_T$ reduces the multivariable stability margin by limiting the system's tolerance to uncertainties.

To quantify the trade-off between NMP zeros and $M_T$, we simplify \eqref{eq:integral_constraint} by considering the typical low-frequency behavior of the control system. First, practical feedback loops typically maintain a very high loop gain at low frequencies to ensure steady-state tracking, which makes $\mathbf{L}(0)$ sufficiently large and yields $\mathbf{T}(0) \approx \mathbf{I}$. Second, control design generally requires the complementary sensitivity function to be flat and smooth in the low-frequency range, implying that its derivative at $s=0$ approximates a null matrix, i.e., $\mathbf{T}'(0) \approx \mathbf{0}$. Substituting these low-frequency properties into \eqref{eq:integral_constraint} yields an explicit lower bound on the peak $M_T$ as a function of the NMP zeros alone.

\begin{proposition}
\label{prop:MT_bound}
Under the assumptions $\mathbf{T}(0) \approx \mathbf{I}$ and $\mathbf{T}'(0) \approx \mathbf{0}$, the peak magnitude $M_T$ satisfies the exponential lower bound
\begin{equation} \label{eq:MT_bound_mimo}
    M_T \ge \exp \left( \frac{\pi}{4} \omega_c \bar{\lambda} \left( \sum_{i=1}^{l} \frac{2\ReOp(z_i)}{|z_i|^2} \mathbf{w}_i \mathbf{w}_i^{\mathrm{H}} \right) \right)
\end{equation}
where $\omega_c = \arg \max_{\omega \in (0, \infty)} \frac{\ln \bar{\sigma}(\mathbf{T}(j\omega))}{\omega^2}$. Equation \eqref{eq:MT_bound_mimo} shows that NMP zeros with small magnitude drive an exponential increase in $M_T$, so these NMP dynamics impose an irreducible limit on the system's stability margin.
\end{proposition}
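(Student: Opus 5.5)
The plan is to start from the Bode integral constraint \eqref{eq:integral_constraint} and specialize it in three steps: remove the matrices $\mathbf{T}^{-1}(0)$ and $\mathbf{C}$ using the low-frequency assumptions, bound the left-hand integral from above by a multiple of $\ln M_T$, and then exponentiate. The argument is at the same engineering level of rigor as the approximations $\mathbf{T}(0)\approx\mathbf{I}$ and $\mathbf{T}'(0)\approx\mathbf{0}$ in the statement. In particular, the step below where the low-frequency part of the integral is dropped is a modelling assumption, not a derived fact.

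\textbf{Simplifying the constraint.} Under $\mathbf{T}(0)\approx\mathbf{I}$ we have $\mathbf{T}(j\omega)\mathbf{T}^{-1}(0)\approx\mathbf{T}(j\omega)$, so the integrand becomes $\ln\bar{\sigma}(\mathbf{T}(j\omega))/\omega^2$. Under $\mathbf{T}'(0)\approx\mathbf{0}$ we get $\mathbf{T}'(0)\mathbf{T}^{-1}(0)\approx\mathbf{0}$, hence $\mathbf{C}\approx\mathbf{0}$. Then \eqref{eq:integral_constraint} reduces to
\begin{equation*}
\int_0^\infty \ln\bar{\sigma}(\mathbf{T}(j\omega))\,\frac{d\omega}{\omega^2}\;\ge\;\frac{\pi}{2}\,\bar{\lambda}\Big(\sum_{i=1}^{l}\frac{2\ReOp(z_i)}{|z_i|^2}\mathbf{w}_i\mathbf{w}_i^{\mathrm{H}}\Big)=:\frac{\pi}{2}\Lambda .
\end{equation*}
Since every term $\frac{2\ReOp(z_i)}{|z_i|^2}\mathbf{w}_i\mathbf{w}_i^{\mathrm H}$ is positive semidefinite, $\Lambda\ge 0$. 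So the right-hand side is a genuine nonnegative obstruction, and it grows as any $|z_i|$ shrinks with $\ReOp(z_i)$ held comparable.

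\textbf{Bounding the integral above.} I would split the integral at $\omega_c/2$. This choice is made so that the tail integral produces exactly the factor $\pi/4$ in \eqref{eq:MT_bound_mimo}.
\begin{itemize}
\item On $[\omega_c/2,\infty)$, use the pointwise bound $\ln\bar{\sigma}(\mathbf{T}(j\omega))\le\ln M_T$ together with $\int_{\omega_c/2}^{\infty}d\omega/\omega^2=2/\omega_c$. This gives a contribution of at most $2\ln M_T/\omega_c$.
\item On $(0,\omega_c/2)$, I would \emph{assume} that the contribution is nonpositive or negligible. The low-frequency flatness $\mathbf{T}'(0)\approx\mathbf{0}$ together with $\mathbf{T}(0)\approx\mathbf{I}$ gives $\ln\bar{\sigma}(\mathbf{T}(j\omega))=O(\omega^2)$ near the origin. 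Hence the integrand stays bounded near $0$ rather than blowing up like $1/\omega^2$.
\end{itemize}
The role of $\omega_c$, the maximizer of $\ln\bar{\sigma}(\mathbf{T}(j\omega))/\omega^2$, is to locate where the weighted integrand concentrates. Below $\omega_c/2$ the weighted log-gain is dominated by its peak region, and I would treat it as not contributing positively.

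\textbf{Concluding.} Combining the two pieces gives $2\ln M_T/\omega_c\ge\frac{\pi}{2}\Lambda$, that is, $\ln M_T\ge\frac{\pi}{4}\omega_c\Lambda$. Exponentiating yields \eqref{eq:MT_bound_mimo}. The monotonicity remark then follows directly: as a dominant zero $z_i\to 0$ with $\ReOp(z_i)/|z_i|$ bounded below, $\frac{2\ReOp(z_i)}{|z_i|^2}\to\infty$. By Weyl's inequality $\Lambda$ is at least this diverging eigenvalue contribution, so the bound on $M_T$ grows exponentially.

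\textbf{Main obstacle.} The hard step is the treatment of the low-frequency interval $(0,\omega_c/2)$. Making it rigorous would require a quantitative assumption on $\ln\bar{\sigma}(\mathbf{T}(j\omega))$ below the crossover. A natural candidate is $\bar{\sigma}(\mathbf{T}(j\omega))\le 1$ for $\omega<\omega_c/2$, which is typical of a well-tuned loop. Failing that, one can bound the low-frequency piece by $\frac{\omega_c}{2}\max_\omega \ln\bar{\sigma}(\mathbf{T}(j\omega))/\omega^2$, which is evaluated at $\omega_c$ by its definition, and absorb it into the slack. I would state the flatness assumption explicitly in the proof. This is presumably also why the constant $\pi/4$ is used instead of the sharper $\pi/2$ that would follow from splitting at $\omega_c$.
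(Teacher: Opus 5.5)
\textbf{The step on $(0,\omega_c/2)$ is a genuine gap.} Your reduction of the Bode constraint and your tail estimate on $[\omega_c/2,\infty)$ are fine. The problem is the claim that $\int_0^{\omega_c/2}\ln\bar{\sigma}(\mathbf{T}(j\omega))\,\omega^{-2}\,d\omega$ is nonpositive or negligible. This does not follow from $\mathbf{T}(0)\approx\mathbf{I}$ and $\mathbf{T}'(0)\approx\mathbf{0}$, and it fails for ordinary loops. Take $L(s)=(2\zeta\omega_n s+\omega_n^2)/s^2$ with $\zeta>0$:
\begin{itemize}
\item It gives $T(0)=1$ and $T'(0)=0$ exactly.
\item Yet $|T(j\omega)|^2-1=\omega^2(2\omega_n^2-\omega^2)/[(\omega_n^2-\omega^2)^2+4\zeta^2\omega_n^2\omega^2]$, which is strictly positive on $(0,\sqrt{2}\omega_n)$.
\item The integrand is positive near the origin and negative beyond $\sqrt{2}\omega_n$, so any maximizer satisfies $\omega_c<\sqrt{2}\omega_n$.
\item Hence the integrand is strictly positive on all of $(0,\omega_c/2)$.
\end{itemize}
The $O(\omega^2)$ behaviour only makes the integrand bounded near the origin; it does not make the piece small.

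Your fallback does not close the gap either. Bounding the integrand on $(0,\omega_c/2)$ by its maximum adds up to $\ln M_T/(2\omega_c)$ to the left-hand side. You then get $\tfrac{5}{2}\ln M_T/\omega_c\ge\tfrac{\pi}{2}\Lambda$, which gives only $\pi/5$ in the exponent. There is no slack to absorb the extra term.

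\textbf{The missing idea: split at $\omega_c$ and use its definition on the whole interval below it.} This is what the paper does.
\begin{itemize}
\item Since $\omega_c$ maximizes $\ln\bar{\sigma}(\mathbf{T}(j\omega))/\omega^2$, for every $\omega\in(0,\omega_c)$ the integrand is at most $\ln\bar{\sigma}(\mathbf{T}(j\omega_c))/\omega_c^2\le\ln M_T/\omega_c^2$. So $\int_0^{\omega_c}\le\ln M_T/\omega_c$.
\item The tail gives $\int_{\omega_c}^{\infty}\le\ln M_T/\omega_c$.
\item Adding these yields $2\ln M_T/\omega_c\ge\frac{\pi}{2}\Lambda$, which is exactly the $\pi/4$ bound with no hypothesis beyond the two stated ones.
\end{itemize}
Splitting at a general point $a$ gives the bound $\ln M_T\,(a/\omega_c^2+1/a)$, which is minimized at $a=\omega_c$. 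So this split is the best this technique allows.

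Your closing remark therefore has it backwards. The constant $\pi/2$ would follow from splitting at $\omega_c$ and \emph{discarding} the low-frequency piece, which is exactly the unjustified move. The halving to $\pi/4$ is the price of keeping that piece and bounding it by the same amount as the tail. Your split at $\omega_c/2$ combined with discarding reaches the same constant only by coincidence.
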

\begin{proof}
Under the stated assumptions, the integral in \eqref{eq:integral_constraint} reduces to a positive lower bound determined by the NMP zeros. Let $\mathcal{K}_z = \frac{\pi}{2} \bar{\lambda} \left( \sum_{i=1}^{l} \frac{2\ReOp(z_i)}{|z_i|^2} \mathbf{w}_i \mathbf{w}_i^{\mathrm{H}} \right)$. Partitioning the integration interval at the characteristic frequency $\omega_c$ yields
\begin{equation}
    \int_{0}^{\omega_c} \frac{\ln \bar{\sigma}(\mathbf{T}(j\omega))}{\omega^2} d\omega + \int_{\omega_c}^{\infty} \frac{\ln \bar{\sigma}(\mathbf{T}(j\omega))}{\omega^2} d\omega \ge \mathcal{K}_z.
\end{equation}

To establish the relationship with the peak magnitude $M_T$, we bound the singular value $\bar{\sigma}(\mathbf{T}(j\omega))$ by its supremum $M_T$ across both the low-frequency and high-frequency ranges, yielding
\begin{equation}
\label{eq:approx_bounds}
\begin{aligned}
    \int_{0}^{\omega_c} \frac{\ln \bar{\sigma}(\mathbf{T}(j\omega))}{\omega^2} d\omega &\le \int_{0}^{\omega_c} \frac{\ln M_T}{\omega_c^2} d\omega  \\
    \int_{\omega_c}^{\infty} \frac{\ln \bar{\sigma}(\mathbf{T}(j\omega))}{\omega^2} d\omega &\le \int_{\omega_c}^{\infty} \frac{\ln M_T}{\omega^2} d\omega.
\end{aligned}
\end{equation}

Combining these two upper bounds gives $\frac{2\ln M_T}{\omega_c} \ge \mathcal{K}_z$. Rearranging this inequality directly establishes the theoretical exponential lower bound imposed by the NMP zeros
\begin{equation} \label{eq:MT_final_bound}
    M_T \ge \exp \left( \frac{\omega_c}{2} \mathcal{K}_z \right). \qedhere
\end{equation}
\end{proof}

\subsection{Numerical Analysis With a $2 \times 2$ MIMO Example}

To validate the performance constraints imposed by NMP zeros, we examine a $2 \times 2$ MIMO feedback control system. The plant model $\mathbf{J}(s)$ and controller $\mathbf{K}(s)$ are given by
\begin{equation}
    \mathbf{J}(s) = \begin{bmatrix}  \frac{5\left( 1- s/z\right)}{s+10} & \frac{0.5}{s+20} \\ \frac{0.5}{s+20} & \frac{5}{s+20} \end{bmatrix},
    \mathbf{K}(s) = \left( K_p + \frac{K_i}{s} \right) \frac{\mathbf{I}_2}{0.05 s + 1}
\end{equation}
where $z$ denotes a real NMP zero and $\mathbf{I}_2$ denotes the $2 \times 2$ identity matrix. The controller parameters are set to $K_p=5$ and $K_i=50$.

We first verify the assumptions via the frequency response of the complementary sensitivity function $\mathbf{T}(j\omega)$, as shown in Fig.~\ref{fig:assumptions}. The magnitude curves originate at 0 dB with a flat slope near the origin. This confirms that $\mathbf{T}(0) = \mathbf{I}$ and $\mathbf{T}'(0) \approx \mathbf{0}$, satisfying the stated assumptions.

Subsequently, we investigate the impact of the NMP zero location ($z \in \{40, 60, 80\}$) on stability. Figure~\ref{fig:stability}(a) illustrates that as the zero $z$ moves closer to the origin, the peak magnitude $M_T$ increases. This elevated peak corresponds to a reduced stability margin, manifested as oscillations in the time-domain response for the $z=40$ case as shown in Fig.~\ref{fig:stability}(b).

To expose the limitations of conventional controller tuning, we evaluate an extreme case where the NMP zero approaches the origin ($z=0.01$). As detailed in Table~\ref{tab:extreme_zero}, stabilizing such a system requires a drastic reduction in PI gains, which severely compresses the closed-loop bandwidth to near zero (0.0025 rad/s). This extreme bandwidth penalty deprives the converter of its dynamic tracking capability, demonstrating that no feasible tuning strategy can stabilize the system without sacrificing practical performance.

\begin{table}[!t]
\renewcommand{\arraystretch}{1.3}
\caption{Dominant closed-loop eigenvalue of the $2\times 2$ MIMO system with an NMP zero at $z=0.01$}
\label{tab:extreme_zero}
\centering
\begin{tabular}{cccc}
\toprule
$K_p$ & $K_i$ & Bandwidth (rad/s) &Dominant Mode \\
\midrule
1.000  & 10.000 & 2.5283& $9979.99$ \\
0.100  & 1.000  & 0.2500& $979.98$ \\
0.010  & 0.100  & 0.0249& $79.98$ \\
0.001  & 0.010  & 0.0025& $-0.0025$ \\
\bottomrule
\end{tabular}
\end{table}

Finally, we verify that the conservativeness of the $M_T$ lower bound is practically acceptable. Figure~\ref{fig:bound} compares the actual sensitivity peaks against the theoretical bounds for $z \in \{60, 80\}$. The measured peaks closely track the dashed theoretical constraints, confirming that NMP zeros provide a reliable metric for evaluating performance boundaries.

\begin{figure}[!t]
   \centering
    \includegraphics[width=3.0in]{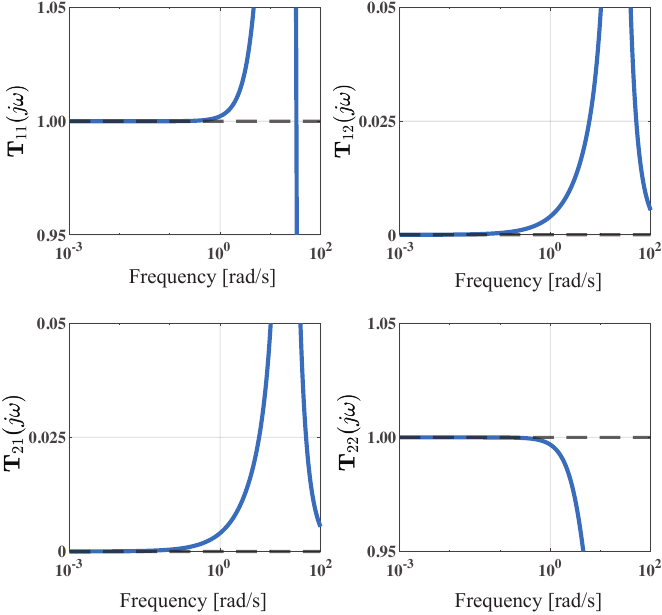}
    \caption{Low-frequency response of $\mathbf{T}(j\omega)$.}
    \label{fig:assumptions}
\end{figure}

\begin{figure}[!t]
   \centering
    \includegraphics[width=3.0in]{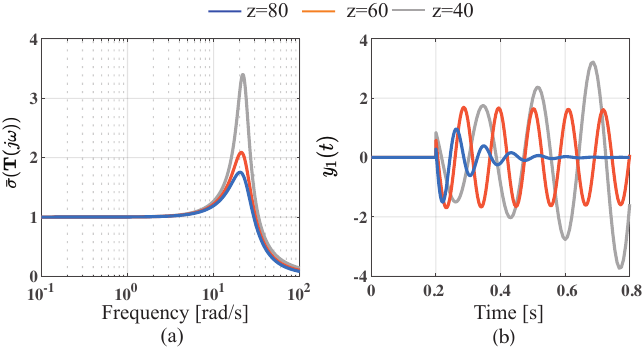}
    \caption{Impact of NMP zero location ($z=40, 60, 80$) on system stability. (a) Singular value plots of $\mathbf{T}(j\omega)$. (b) Time-domain responses of the system.}
    \label{fig:stability}
\end{figure}

\begin{figure}[!t]
   \centering
    \includegraphics[width=3.0in]{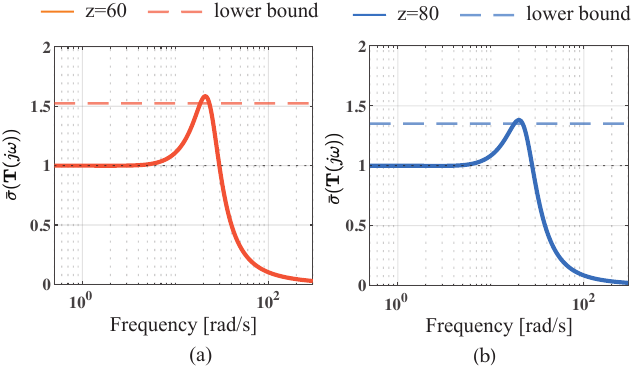}
    \caption{The peak magnitudes of the complementary sensitivity function and the theoretical lower boundaries of the system ($K_p=5$ and $K_i=30$): (a) $z=60$, and (b) $z=80$.}
    \label{fig:bound}
\end{figure}

\section{Quantification of NMP Zeros From Network Data}
\label{sec:stability_analysis}

\subsection{Decoupling of NMP Zeros From Converter Models}

To analyze the stability boundaries, we investigate the NMP zeros of the system. The transmission zeros of the open-loop system are determined by the roots of the determinant equation $\det(\mathbf{L}(s)) = 0$. By substituting the feedback relationship $\mathbf{K}_{\mathrm{VSC}}(s) = \mathbf{J}_{\mathrm{VSC}}^{-1}(s)$, the determinant is expanded as follows
\begin{equation}
\label{eq:det_analysis}
\det(\mathbf{L}(s)) = \det(\mathbf{J}_{\mathrm{NET}}(s)) \cdot \det(\mathbf{J}_{\mathrm{VSC}}^{-1}(s)).
\end{equation}

It can be seen that the zeros of the aggregated controller $\mathbf{K}_{\mathrm{VSC}}(s)$ correspond directly to the poles of the device Jacobian $\mathbf{J}_{\mathrm{VSC}}(s)$ in \eqref{eq:det_analysis}. Under the standard power system assumption that both the passive network plant and the individual device controllers are inherently stable, $\mathbf{J}_{\mathrm{VSC}}(s)$ contains no unstable poles, and its inverse $\mathbf{K}_{\mathrm{VSC}}(s)$ introduces no NMP zeros.

Any NMP zeros in the aggregated multi-converter system therefore originate strictly from the network plant, and the stability assessment reduces to finding the roots of the network Jacobian
\begin{equation}
\label{eq:net_zero_condition}
\det(\mathbf{J}_{\mathrm{NET}}(s)) = 0.
\end{equation}

This decoupling eliminates the need for proprietary control models of commercial VSCs. Because the NMP zeros are dictated entirely by $\mathbf{J}_{\mathrm{NET}}(s)$, their locations can be determined from physical network parameters and steady-state operating points alone.

Operationally, $\det(\mathbf{J}_{\mathrm{NET}}(s))=0$ identifies frequencies of network rank loss. Since its computation relies solely on routine steady-state data, NMP zeros emerge as inherent network properties. This allows operators to monitor the dominant zero in real time using updated operating points, treating converters as black-box injections and eliminating the need for proprietary controller models.

\subsection{NMP Zeros Quantification Using Network Parameters and Operating Points}

To derive an explicit analytical expression for these zeros, it is necessary to decouple the frequency variable $s$ from the network parameters. We achieve this by applying a similarity transformation to $\mathbf{J}_{\mathrm{NET}}(s)$ using the matrix $\mathbf{W}=\frac{1}{\sqrt{2}}\left[\begin{smallmatrix}1 & j \\1 & -j\end{smallmatrix}\right]\otimes \mathbf{I}_N$. This transformation inherently preserves the determinant and does not affect the system zeros, satisfying the strict equivalence condition $\det\!\big(\mathbf{J}_{\mathrm{NET}}(s)\big)=0 \Leftrightarrow \det\!\big(\mathbf{W}\mathbf{J}_{\mathrm{NET}}(s)\mathbf{W}^{-1}\big)=0$. The resulting transformed matrix is directly expressed as
\begin{equation}
\label{eq:transformed_J}
\mathbf{W}\mathbf{J}_{\mathrm{NET}}(s)\mathbf{W}^{-1} =
\begin{bmatrix}
\bar{\gamma}(s)\mathbf{Y} & j\mathbf{S} \\
-j\overline{\mathbf{S}} & \gamma(s)\overline{\mathbf{Y}}
\end{bmatrix} = \begin{bmatrix} \mathbf {A}_{11} & \mathbf {A}_{12} \\ \mathbf {A}_{21} & \mathbf {A}_{22} \end{bmatrix}.
\end{equation}
where $\mathbf{S} = \mathbf{P} + j\mathbf{Q} = \diag\{ S_1e^{j\phi_1}, \dots, S_Ne^{j\phi_N} \}$ represents the complex apparent power matrix, with $S_i$ and $\phi_i$ denoting the magnitude and phase angle of the power injection at node $i$, respectively. The complex scalar function is defined as $\gamma(s)=\alpha(s)+j\beta(s)$.

Assuming the sub-block $\mathbf {A}_{22}$ is invertible, we utilize the Schur complement to evaluate the determinant \cite{Fzhang2005}, i.e., $\det(\mathbf{W}\mathbf{J}_{\mathrm{NET}}(s)\mathbf{W}^{-1}) = \det(\mathbf {A}_{22})\det(\mathbf {A}_{11}-\mathbf {A}_{12}\mathbf {A}_{22}^{-1}\mathbf {A}_{21})$. The condition for identifying a system zero then reduces to finding the roots where the determinant of the Schur complement vanishes
\begin{equation}
\label{eq:schur_condition}
\begin{split}
0 & = \det(\mathbf{A}_{11}-\mathbf{A}_{12}\mathbf{A}_{22}^{-1}\mathbf{A}_{21}) \\
& = \prod_{i=1}^{N} \left( \lambda_i \left( \mathbf{S}^{-1}\mathbf{Y}\overline{\mathbf{S}}^{-1}\overline{\mathbf{Y}} \right) - \frac{1}{\bar{\gamma}(s)\gamma(s)} \right).
\end{split}
\end{equation}

Equation \eqref{eq:schur_condition} demonstrates that the determinant factors into a product of eigenvalues. By defining $\lambda_i(\cdot)$ as the $i$-th eigenvalue operator, finding the system zeros translates to solving $N$ independent scalar equations
\begin{equation}
\label{eq:eigen_problem}
z_i^2 + 1 = \lambda_i \left( \mathbf{S}^{-1}\mathbf{Y}\overline{\mathbf{S}}^{-1}\overline{\mathbf{Y}} \right).
\end{equation}

To analyze the structural properties governing these roots, we decompose the matrix product $\mathbf{S}^{-1}\mathbf{Y}\overline{\mathbf{S}}^{-1}\overline{\mathbf{Y}}$ to isolate the network topology from the steady-state operating points. This decomposition establishes the eigenvalue equivalence $\lambda_i(\mathbf{S}^{-1}\mathbf{Y}\overline{\mathbf{S}}^{-1}\overline{\mathbf{Y}}) = \lambda_i(\mathbf{D}\mathbf{B}_r\overline{\mathbf{D}}\mathbf{B}_r)$, where the diagonal matrix $\mathbf{D} = \diag\left\{\frac{(U_1e^{j\theta_1})^2}{S_1e^{j\phi_1}},\dots,\frac{(U_Ne^{j\theta_N})^2}{S_Ne^{j\phi_N}}\right\}$ captures the operating state.

Substituting this relationship into the eigenvalue expression yields the term $\lambda_i(\mathbf{D}\mathbf{B}_r\overline{\mathbf{D}}\mathbf{B}_r)$. Noting the symmetry $\mathbf B_r=\mathbf B_r^{\mathrm{H}}$ and the conjugate transpose property $\overline{\mathbf{D}}=\mathbf{D}^{\mathrm{H}}$, we apply a similarity transformation using $\mathbf{B}_r^{1/2}$. This transformation symmetrizes the expression and maps the eigenvalues directly to the squares of the corresponding singular values.
\begin{equation}
\label{eq:similarity}
\begin{aligned}
\lambda_i(\mathbf{D}\mathbf{B}_r\overline{\mathbf{D}}\mathbf{B}_r) &= \lambda_i\left( \mathbf{B}_r^{1/2} (\mathbf{D}\mathbf{B}_r\overline{\mathbf{D}}\mathbf{B}_r) \mathbf{B}_r^{-1/2} \right) \\
&= \sigma_i^2\left( \mathbf{B}_r^{1/2}\mathbf{D}\mathbf{B}_r^{1/2} \right).
\end{aligned}
\end{equation}
where $\sigma_i(\cdot)$ specifically denotes the $i$-th singular value operator.

Finally, substituting this singular value mapping back into \eqref{eq:eigen_problem} yields the explicit formulation for the critical NMP zero $z_i$ as
\begin{equation}
\label{eq:zero_final}  
z_i =\omega_0 \sqrt{\sigma_i^2\left( \mathbf{B}_r^{1/2}\mathbf{D}\mathbf{B}_r^{1/2} \right) - 1}.
\end{equation}

Because singular values $\sigma_i$ are real and non-negative, the NMP zeros of the multi-converter system are strictly real. An unstable zero ($z_i > 0$) exists if and only if the structural threshold $\sigma_i\left( \mathbf{B}_r^{1/2}\mathbf{D}\mathbf{B}_r^{1/2}\right) > 1$ is breached, a condition that is readily satisfied in practical multi-converter systems.

\section{NMP Zero Reshaping for Stability Enhancement}
\label{sec:reshaping}

\subsection{Lower Bound on the Complementary Sensitivity Peak Governed by the Dominant NMP Zero}

Because the NMP zeros of the multi-converter system are strictly real, as established in Section~\ref{sec:stability_analysis}, each zero lies on the positive real axis and can be ordered by its distance from the origin. The smallest among them, termed the dominant NMP zero, therefore governs the spectral bound in a scalar sense, reducing the MIMO stability constraint to a single scalar quantity. With all zeros real, the bound in \eqref{eq:MT_bound_mimo} simplifies to
\begin{equation}
\label{eq:MT_bound_real}
    M_T \ge \exp \left( \frac{\pi}{4} \omega_c \bar{\lambda} \left( \sum_{i=1}^{l} \frac{2}{z_i} \mathbf{w}_i \mathbf{w}_i^{\mathrm{H}} \right) \right).
\end{equation}

As \eqref{eq:MT_bound_real} shows, the peak magnitude of the complementary sensitivity function, $M_T$, is constrained by the NMP zeros of $\mathbf{J}_{\mathrm{NET}}(s)$. Under weak grid conditions, a critical real NMP zero $z_{0}$ approaches the origin and this constraint tightens, degrading stability margins. Given that this dominant zero lies significantly closer to the origin than all other zeros ($z_{0} \ll |z_i|$ for $i \neq 0$), its reciprocal term $1/z_0$ overwhelmingly dictates the summation. Because the associated output direction $\mathbf{w}_0$ is a unit vector ($\|\mathbf{w}_0\|_2=1$), the outer product $\mathbf{w}_0 \mathbf{w}_0^{\mathrm{H}}$ forms a rank-1 projection matrix whose maximum eigenvalue is exactly 1. By isolating this dominant term, the spectral radius constraint reduces to
\begin{equation}
\label{eq:MT_bound_dominant}
 \bar{\lambda} \left( \sum_{i=1}^{l} \frac{2}{z_i} \mathbf{w}_i \mathbf{w}_i^{\mathrm{H}} \right) \approx \frac{2}{z_0}.
\end{equation}

Substituting \eqref{eq:MT_bound_dominant} into \eqref{eq:MT_bound_real} yields the simplified lower bound
\begin{equation}
\label{eq:MT_bound_scalar}
    M_T \ge \exp \left( \frac{\pi \omega_c}{2 z_0} \right).
\end{equation}

Equation \eqref{eq:MT_bound_scalar} demonstrates that the lower bound of $M_T$ grows exponentially as the dominant zero $z_0$ approaches the origin. A high $M_T$ produces severe resonant peaks and erodes stability margins, so the control strategy must shift $z_0$ away from the origin to preserve stable operation.

\subsection{NMP Zero Reshaping Strategy via Participation Factors}

To shift the NMP zero $z_0$ further away from the origin, a voltage droop control loop is introduced to the grid-following converters. The linearized local control law for the $i$-th converter is defined as $\Delta Q_i = -k_i \Delta U_i$, where $k_i$ is the droop gain. The implementation of this voltage droop control is adopted from \cite{Lhuang2020}.

Physically, this control measure establishes a direct proportional feedback path from the voltage magnitude to the reactive power injection. As illustrated in Fig.~\ref{figxiachui}, this closed-loop mechanism is mathematically equivalent to paralleling a feedback gain matrix $\mathbf{K} = \diag(0, \dots, k_i, \dots, 0)$ across the open-loop network plant $\mathbf{J}_{\mathrm{NET}}(s)$, yielding the modified system Jacobian
\begin{equation} \label{eq:matrix_perturbation}
    \mathbf{J}_{\mathrm{sys}}(s) = \mathbf{J}_{\mathrm{NET}}(s) + \mathbf{K}.
\end{equation}
\begin{figure}[!t]
\centering
\includegraphics[width=3.0in]{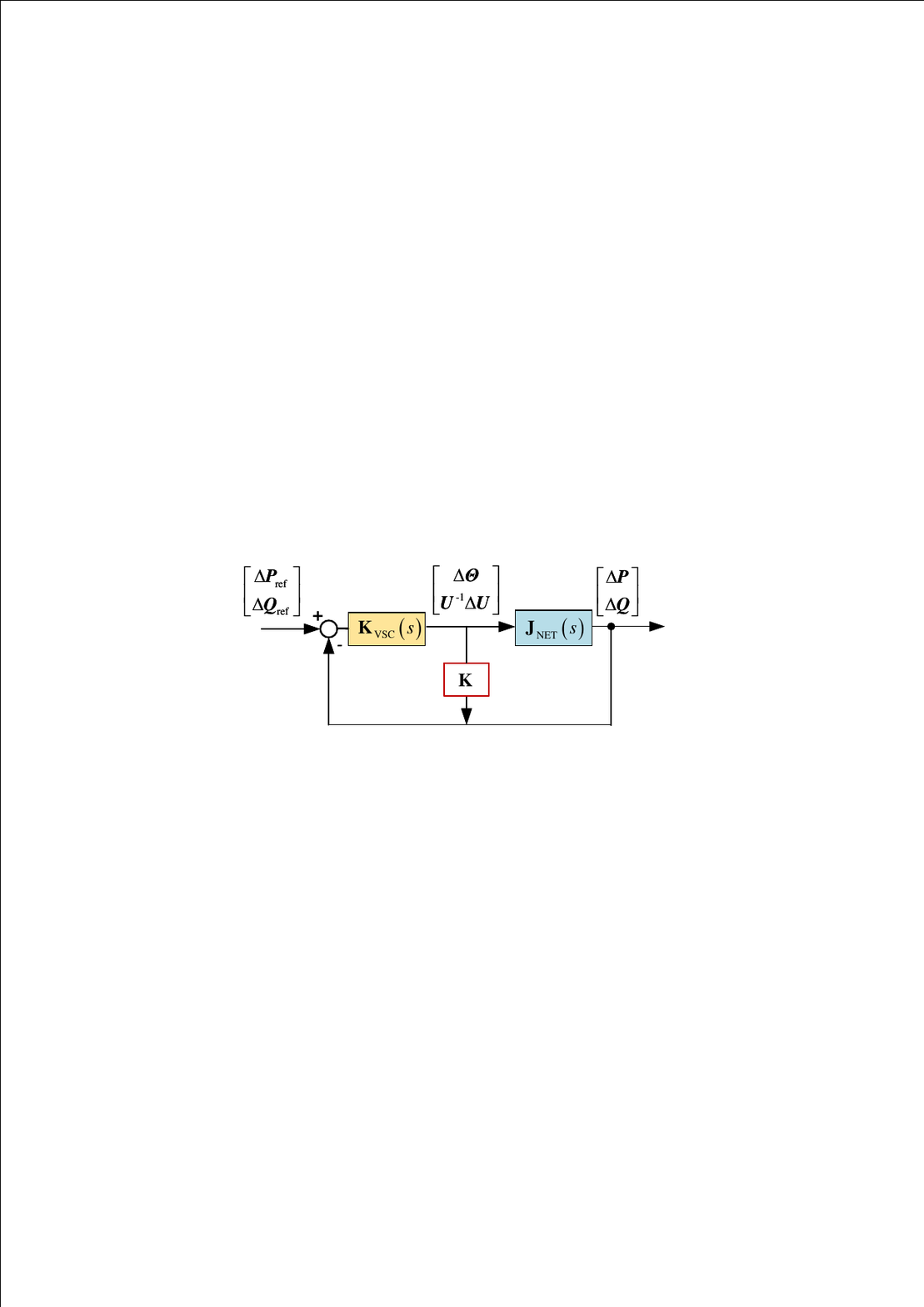}
\caption{Schematic of the Jacobian perturbation where voltage control acts as a feedback gain $\mathbf{K}$ added to the network plant $\mathbf{J}_{\mathrm{NET}}$.}
\label{figxiachui}
\end{figure}

The control placement objective is to identify the specific node $i$ where the control effort $k_i$ yields the maximum rightward shift of the critical zero
\begin{equation}
    i_{\text{opt}} = \arg \max_{i} \ReOp\left( \frac{\mathrm{d}z_0}{\mathrm{d}k_i} \right).
\end{equation}

Applying the Implicit Function Theorem \cite{Magnus2019} to $\det(\mathbf{J}_{\mathrm{sys}}(z_0, k_i)) = 0$ gives the analytical sensitivity of the critical zero with respect to the local gain as
\begin{equation} \label{eq:sensitivity_breakdown}
    \frac{\mathrm{d}z_0}{\mathrm{d}k_i} = - \frac{\mathbf{l}^{\mathrm{H}} \frac{\partial \mathbf{J}_{\mathrm{sys}}}{\partial k_i} \mathbf{r}}{\mathbf{l}^{\mathrm{H}} \frac{\partial \mathbf{J}_{\mathrm{sys}}}{\partial s} \mathbf{r}}.
\end{equation}
where $\mathbf{l}$ and $\mathbf{r}$ denote the left and right eigenvectors of the Jacobian $\mathbf{J}_{\mathrm{sys}}$ evaluated at $k=0$ corresponding to the zero $z_0$.

For a single-node perturbation, the partial derivative in the numerator of \eqref{eq:sensitivity_breakdown} simplifies to $\mathbf{e}_i \mathbf{e}_i^\T$, reducing the entire numerator to the geometric participation factor $p_i = l_i r_i$ associated with the $i$-th node. Meanwhile, the denominator acts as a global scaling factor that remains invariant regardless of the chosen control location $i$.

When converters operate under standard active and reactive power control, the global scaling factor in the denominator of \eqref{eq:sensitivity_breakdown} is strictly positive real for any passive power network, as proved in Appendix~\ref{appendix:proof}. With this positive scaling established, the local sensitivity reduces to
\begin{equation}
    \frac{\mathrm{d}z_0}{\mathrm{d}k_i} \propto \ReOp(p_i).
\end{equation}

The node with the largest real participation factor therefore receives the voltage droop control, maximizing the rightward zero shift per unit of control effort
\begin{equation}
    i_{\text{opt}} = \arg \max_{i} \ReOp(p_i).
\end{equation}

\section{Case Studies}
\label{sec:case_study}

\subsection{Impact of Operating Points on NMP Zeros and System Stability}

To verify the proposed theoretical framework, time-domain simulations and frequency analyses are conducted on a modified IEEE 9-bus system built in MATLAB/Simulink. As illustrated in Fig.~\ref{fig:system_topology}, the testbed comprises three VSCs operating under active and reactive power (PQ) outer-loop control. The system operates at a nominal frequency of 50 Hz. The initial steady-state active and reactive power injections for the three VSCs are set as $(P_1, Q_1) = (0.9, 0.44)$~p.u., $(P_2, Q_2) = (0.8, 0.6)$~p.u., and $(P_3, Q_3) = (1.0, 0)$~p.u., with the corresponding voltage phase angles being $\theta_1 = 0.05$~rad, $\theta_2 = 0.1$~rad, and $\theta_3 = 0.15$~rad, respectively. The parameters of the VSCs are listed in Table~\ref{tab:hardware_parameters} and Table~\ref{tab:control_parameters}, while the parameters of the power network are provided in Appendix~\ref{app:parameters}.

\begin{table}[htbp]
\renewcommand{\arraystretch}{1.3}
\centering
\caption{Main Circuit and Hardware Parameters of VSCs}
\label{tab:hardware_parameters}
\begin{tabular}{llc}
\toprule
\multicolumn{1}{c}{\textbf{Symbol}} & \multicolumn{1}{c}{\textbf{Parameter}} & \textbf{Value} \\
\midrule
$S, U_{\mathrm{B}}$    & Base capacity and voltage of AC system & $1500$\,kVA, $690$\,V \\
$S_{\mathrm{B}}$       & Base capacity of VSC             & $1500$\,kVA \\
$L_{\mathrm{f}}$       & Filter inductance                & $0.05$\,p.u. \\
$C_{\mathrm{f}}$       & Filter capacitance               & $0.05$\,p.u. \\
\bottomrule
\end{tabular}
\end{table}

\begin{table}[htbp]
\renewcommand{\arraystretch}{1.3}
\centering
\caption{Control Parameters of VSCs}
\label{tab:control_parameters}
\begin{tabular}{lc}
\toprule
\multicolumn{1}{c}{\textbf{Parameter}} & \textbf{Value} \\
\midrule
Power control loop of VSCs $H_{\mathrm{PQ}}(s)$ & $0.4 + 5/s$ \\
PLL PI controller of VSC 1 $H_{\mathrm{PLL,1}}(s)$ & $30 + 9000/s$ \\
PLL PI controller of VSC 2 $H_{\mathrm{PLL,2}}(s)$ & $25 + 8100/s$ \\
PLL PI controller of VSC 3 $H_{\mathrm{PLL,3}}(s)$ & $12 + 7000/s$ \\
Voltage feedforward filter $H_{\mathrm{FF}}(s)$ & $1/(1+0.001s)$ \\
\bottomrule
\end{tabular}
\end{table}

\begin{figure}[!t]
\centering
\includegraphics[width=3.0in]{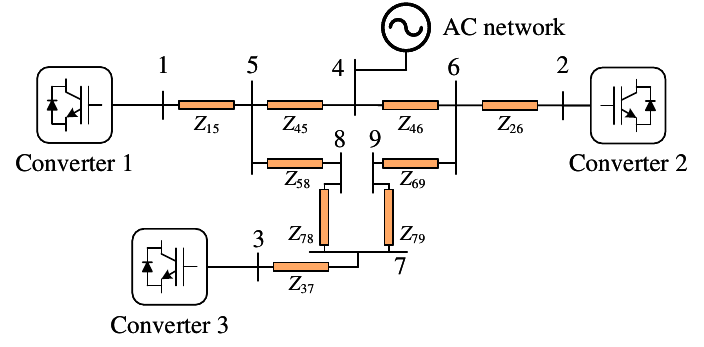}
\caption{Diagram of the modified 3-machine 9-bus system with three PQ-controlled VSCs.}
\label{fig:system_topology}
\end{figure}

The active power output of VSC3 is progressively increased across three operating scenarios, namely Case~1 ($P_3 = 0.8$~p.u.), Case~2 ($P_3 = 0.9$~p.u.), and Case~3 ($P_3 = 1.0$~p.u.). The critical NMP zeros are computed analytically from \eqref{eq:zero_final}, where $\mathbf{D} = \diag\{(U_i e^{j\theta_i})^2 / (S_i e^{j\phi_i})\}$ encodes the operating state and $\mathbf{B}_r$ is the reduced nodal susceptance matrix. For Case~1, these matrices evaluate to $ \mathbf{D}=\diag\{1.11 \angle{-9.56^\circ},\; 1.11 \angle{-23.08^\circ},\; 0.74 \angle{73.36^\circ}\}$,
$$ \mathbf{B}_r = \begin{bmatrix} 6.91 & -0.43 & -0.16 \\ -0.43 & 8.97 & -0.46 \\ -0.16 & -0.46 & 2.03 \end{bmatrix}$$
from which the dominant singular value yields the critical NMP zero $z_0=640$~rad/s. Table~\ref{tab:op_conditions} summarizes the steady-state voltage profiles and the analytically computed NMP zeros under all three cases.

\begin{table}[!t]
\renewcommand{\arraystretch}{1.3}
\caption{Steady-State Profiles and NMP Zeros Under Different Operating Conditions}
\label{tab:op_conditions}
\centering
\begin{tabular}{lcccc}
\toprule
Condition & $U_1$ (p.u.) & $U_2$ (p.u.) & $U_3$ (p.u.) & NMP Zero \\
\midrule
Case~1 & $1.06 \angle 8.04^\circ$ & $1.06 \angle 6.51^\circ$ & $0.95 \angle 26.16^\circ$ & 640 \\
Case~2 & $1.06 \angle 8.13^\circ$ & $1.06 \angle 6.69^\circ$ & $0.92 \angle 30.64^\circ$ & 495 \\
Case~3 & $1.06 \angle 8.25^\circ$ & $1.05 \angle 6.89^\circ$ & $0.86 \angle 36.68^\circ$ & 341 \\
\bottomrule
\end{tabular}
\end{table}

As Table~\ref{tab:op_conditions} shows, the critical NMP zero drops monotonically from 640 to 341~rad/s as $P_3$ increases from 0.8 to 1.0~p.u. This leftward zero migration directly constrains the achievable stability margin through the exponential bound $M_T \ge \exp(\pi \omega_c / 2z_0)$ established in \eqref{eq:MT_bound_scalar}. As Fig.~\ref{fig:msv_peak} confirms, the peak magnitude of the maximum singular value of the complementary sensitivity function increases from 26 dB in Case~1 to 43 dB in Case~2. The unstable scenario (Case~3, $P_3 = 1.0$~p.u.) is omitted from this comparison because the sensitivity peak loses its physical validity for an internally unstable closed-loop system. This escalation correlates with the leftward migration of the critical NMP zero from $z_0 = 640$ to $z_0 = 495$~rad/s. According to the Small Gain Theorem, a larger peak magnitude dictates a lower tolerance to multiplicative uncertainties. Consequently, the observed amplification in the sensitivity peak indicates a qualitative degradation of the robust stability margin, rendering the system increasingly susceptible to instability as the operating condition becomes more stressed.

\begin{figure}[!t]
   \centering
   \includegraphics[width=2.5in]{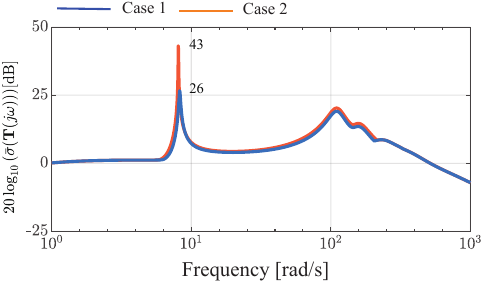}
    \caption{Peak magnitude of the complementary sensitivity function $\mathbf{T}(j\omega)$ under stable operating conditions (Case~1 and Case~2).}
    \label{fig:msv_peak}
\end{figure}

The Generalized Nyquist criterion provides an independent quantitative confirmation of the stability degradation \cite{AGM}. As shown in Fig.~\ref{fig:nyquist}(a), the shortest distance from the eigenvalue loci of $\mathbf{L}(j\omega)=\mathbf{J}_{\mathrm{NET}}(j\omega)\mathbf{K}_{\mathrm{VSC}}(j\omega)$ to the critical point $(-1, j0)$ is 0.25 in Case~1. Consistent with the $M_T$ escalation and the leftward shift of the NMP zero, this margin shrinks to 0.074 in Case~2 (Fig.~\ref{fig:nyquist}(b)). In Case~3, the eigenvalue loci enclose the critical point (Fig.~\ref{fig:nyquist}(c)), confirming the system instability. The consistent correspondence among the $z_0$ location, the $M_T$ peak, and the Nyquist margin establishes the NMP zero as a reliable scalar proxy for the multivariable stability of the multi-converter system. The results show that the migration of $z_0$ toward the origin directly causes the reduction in the stability margin. Therefore, unlike the closed-loop $M_T$ and Nyquist metrics, this NMP zero proxy originates exclusively from the network Jacobian and is independent of individual converter parameters. It allows the stability margin to be evaluated directly through a closed-form expression constructed from grid admittance and nodal power injections, effectively bypassing the black-box barrier of commercial equipment.

\begin{figure*}[!t]
   \centering
   \includegraphics[width=6.1in]{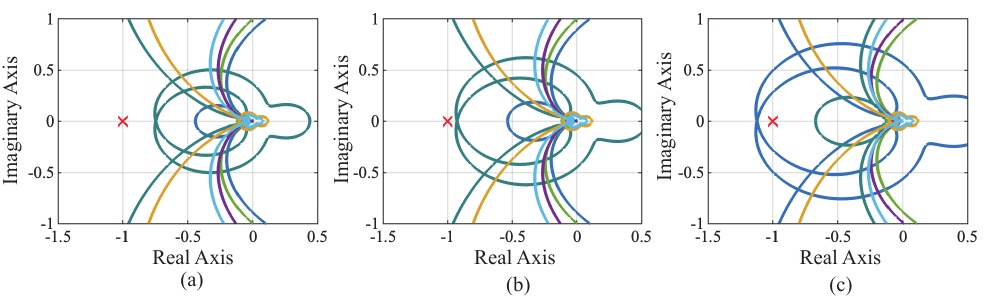}
    \caption{Generalized Nyquist curves of the loop gain matrix $\mathbf{L}(j\omega)$ under varying operating conditions: (a) Case~1, (b) Case~2, and (c) Case~3.}
    \label{fig:nyquist}
\end{figure*}

Time-domain transient responses corroborate these frequency-domain predictions. Following a 0.02~p.u. external grid voltage dip, Fig.~\ref{fig:power_base1}(a) shows that all three VSCs exhibit well-damped recovery in Case~1, with the active power oscillation settling within approximately 0.3~s. In Case~2, Fig.~\ref{fig:power_base1}(b) reveals sustained oscillations with reduced damping, consistent with the near-marginal Nyquist margin of 0.074. In Case~3, Fig.~\ref{fig:power_base1}(c) confirms that the active power diverges, verifying the instability predicted by both the NMP zero analysis and the Generalized Nyquist encirclement.

\begin{figure}[!t]
   \centering
   \includegraphics[width=2.5in]{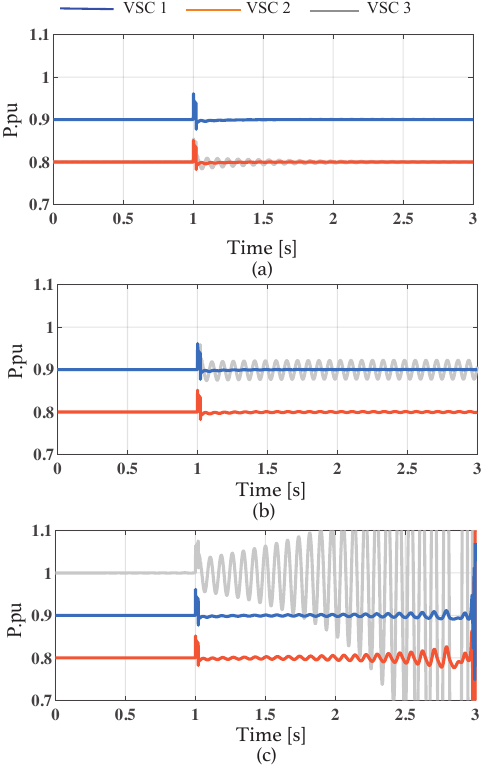}
    \caption{Time-domain active power responses of the VSCs following a grid voltage dip: (a) Case~1, (b) Case~2, and (c) Case~3.}
    \label{fig:power_base1}
\end{figure}

\subsection{Effectiveness of the Zero Reshaping Strategy}

Having established that the NMP zero location governs the stability margin, this subsection validates the zero reshaping strategy proposed in Section~\ref{sec:reshaping}. In the unstable Case~3 ($P_3=1.0$), the critical NMP zero is located at $z_0 = 341$~rad/s. To restore stability, the reactive power outer loop at a selected node is replaced by voltage droop control $\Delta Q_i = -k_i \Delta U_i$ with $k_i=10$. The core prediction of the reshaping strategy is that deploying the droop at the node with the largest real participation factor $\ReOp(p_i)$ will produce the maximum rightward shift of $z_0$ and, consequently, the greatest improvement in stability margin.

Evaluating the participation factors from the eigenvectors of $\mathbf{J}_{\mathrm{NET}}(z_0)$ at $z_0 = 341$ reveals a non-uniform structure, with $\ReOp(p_1) = 0.0005$, $\ReOp(p_2) = 0.0019$, and $\ReOp(p_3) = 0.497$. Node~3 dominates by two orders of magnitude, and the reshaping strategy therefore identifies it as the optimal control location.

\begin{table}[!t]
\renewcommand{\arraystretch}{1.3}
\caption{Steady-State Profiles and NMP Zeros with Droop Control Applied at Different Nodes}
\label{tab:droop_conditions}
\centering
\begin{tabular}{lcccc}
\toprule
Node & $U_1$ (p.u.) & $U_2$ (p.u.) & $U_3$ (p.u.) & NMP Zero \\
\midrule
1 & $1.00 \angle 8.66^\circ$ & $1.05 \angle 6.94^\circ$ & $0.87 \angle 35.41^\circ$ & 356 \\
2 & $1.05 \angle 8.31^\circ$ & $1.00 \angle 7.19^\circ$ & $0.89 \angle 33.70^\circ$ & 387 \\
3 & $1.06 \angle 8.19^\circ$ & $1.06 \angle 6.81^\circ$ & $1.00 \angle 31.18^\circ$ & 863 \\
\bottomrule
\end{tabular}
\end{table}

To rigorously validate this prediction, identical voltage droop control is separately deployed at each of the three nodes. Table~\ref{tab:droop_conditions} summarizes the resulting steady-state profiles and NMP zeros. The experimental outcome exhibits a clear positive correlation between the participation factor magnitude and the achieved zero shift. Node~1 ($\ReOp(p_1) = 0.0005$) yields a marginal shift from 341 to 356~rad/s; Node~2 ($\ReOp(p_2) = 0.0019$) produces a slightly larger shift to 387~rad/s; and Node~3 ($\ReOp(p_3) = 0.497$) achieves a shift to 863~rad/s, a 153\% increase that pushes the critical zero beyond the bandwidth ceiling. This monotonic relationship between control effectiveness and participation factor magnitude confirms the analytical proportionality $\mathrm{d}z_0/\mathrm{d}k_i \propto \ReOp(p_i)$ derived in Section~\ref{sec:reshaping}, and demonstrates that the proposed reshaping method correctly identifies the optimal control location without iterative search or full eigenvalue recomputation.

\begin{figure}[!t]
\centering
\includegraphics[width=2.5in]{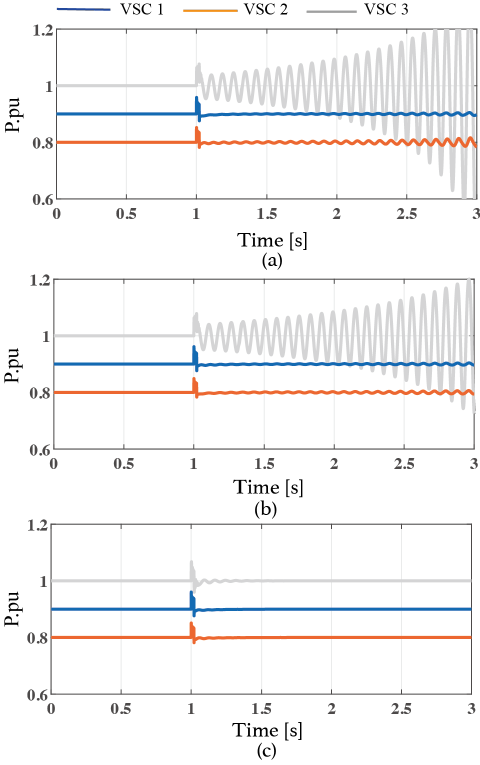}
\caption{Time-domain active power responses with droop control deployed at different nodes: (a) Node~1, (b) Node~2, and (c) Node~3.}
\label{fig:droop_comp}
\end{figure}

The time-domain responses in Fig.~\ref{fig:droop_comp} provide further validation. Following the same 0.02~p.u. voltage dip applied in the baseline analysis, deploying voltage droop control at Node~3 suppresses the oscillations and restores a well-damped response (Fig.~\ref{fig:droop_comp}(c)), consistent with the enlarged NMP zero at 863~rad/s. In contrast, Fig.~\ref{fig:droop_comp}(a) and Fig.~\ref{fig:droop_comp}(b) show that identical control at Nodes~1 and~2 fails to stabilize the system, with persistent divergent oscillations that reflect the insufficient rightward zero shift at these low-participation nodes. These results confirm that the zero reshaping strategy identifies the optimal control location and restores stability by shifting the critical NMP zero away from the origin, while identical control at low-participation nodes fails to achieve this outcome.

\section{Conclusion}
\label{sec:conclusion}

This paper presented a Jacobian-based framework for quantifying and reshaping the NMP zeros that limit stability margins in multi-converter systems. Three results were established. First, the NMP zero locations are determined exclusively by the grid admittance matrix and the steady-state nodal power injections, independent of converter controller parameters. This bypasses the need for proprietary device models and enables stability assessment from network-level data alone. Second, an exponential lower bound on the peak of the complementary sensitivity function was derived as a function of the dominant NMP zero. As this zero approaches the origin, the sensitivity peak increases, imposing a ceiling on achievable stability margins that is independent of controller design. Third, a zero reshaping strategy that requires no iterative search was developed. Specifically, the sensitivity of the dominant zero to a local voltage droop gain is proportional to the real participation factor at each node, which reduces control placement to a scalar ranking across candidate buses.

Simulations on a modified IEEE 9-bus system support these results. As the dominant NMP zero is driven toward the origin, the peak of the complementary sensitivity function increases in agreement with the derived bound, confirming the predicted degradation of stability margins. Deploying voltage droop control at the node with the highest participation factor repositions the zero and restores stable operation, whereas identical control at nodes with low participation factors remains ineffective. These results provide system operators with a toolset, driven by data from the network, for assessing and mitigating stability limitations in grids dominated by converters without requiring device-level models.

\appendices

\section{Derivation of the Network Jacobian Matrix}
\label{app:jacobian}

This appendix derives the dynamic network Jacobian matrix $\mathbf{J}_{\mathrm{NET}}(s)$ based on the analytical approach in \cite{Zyang2020}. We define the steady-state complex nodal voltage as $\mathcal{U}_i = U_{i}e^{j\theta_{i}}$ and partition the matrix into diagonal and off-diagonal blocks.

\paragraph{Diagonal Block of $\mathbf{J}_{\mathrm{NET}}(s)$}
The diagonal block $\mathbf{J}_{ii}(s)$ captures the local power-to-voltage sensitivities. Linearizing the power flow equations produces a constant steady-state power matrix and a summation of dynamic admittance terms. Since the self-coupling phase difference is zero, the coefficient simplifies to a real-valued expression
\begin{equation}
\label{eq:jac_diag_final}
\mathbf{J}_{ii}(s) =
\begin{bmatrix}
-Q_{i} & P_{i}\\
P_{i} & Q_{i}
\end{bmatrix}
+
\sum_{\substack{j=1 \\ j \neq i}}^{N}
\ReOp( \mathcal{U}_i B_{ij} \overline{\mathcal{U}}_j )
\begin{bmatrix}
\alpha(s) & \beta(s) \\
-\beta(s) & \alpha(s)
\end{bmatrix}
\end{equation}
where $\mathcal{U}_i B_{ij} \overline{\mathcal{U}}_j = B_{ij} U_{i} U_{j} e^{j(\theta_{i}-\theta_{j})}$.

\paragraph{Off-Diagonal Block of $\mathbf{J}_{\mathrm{NET}}(s)$}
The off-diagonal block $\mathbf{J}_{ij}(s)$ describes the cross-coupling effects from adjacent node $j$ to node $i$. We derive this block by separating the real and imaginary parts of the effective complex admittance
\begin{equation}
\label{eq:jac_offdiag_final}
\begin{aligned}
\mathbf{J}_{ij}(s) &=
\ReOp( \mathcal{U}_i B_{ij} \overline{\mathcal{U}}_j )
\begin{bmatrix}
\alpha(s) & \beta(s) \\
-\beta(s) & \alpha(s)
\end{bmatrix}
\\
&\quad +
\ImOp( \mathcal{U}_i B_{ij} \overline{\mathcal{U}}_j )
\begin{bmatrix}
\beta(s) & -\alpha(s) \\
\alpha(s) & \beta(s)
\end{bmatrix}.
\end{aligned}
\end{equation}

This formulation demonstrates that the spatial distribution of the Jacobian coefficients corresponds exactly to the elements of the weighted admittance matrix $\mathbf{Y} = \mathbf{U} \mathbf{B}_r \bar{\mathbf{U}}$ defined in the main text. This structure encodes both the network topology and the steady-state voltage profile.

\section{The Relationship Between NMP Zero Reshaping and Participation Factors}
\label{appendix:proof}

This appendix proves that, for a multi-converter system under PQ outer-loop control, $\ReOp(\mathrm{d}z_0/\mathrm{d}k_i) \propto \ReOp(p_i)$.

From \eqref{eq:sensitivity_breakdown}, let $\mathbf{l}$ and $\mathbf{r}$ denote the left and right eigenvectors of $\mathbf{J}_{\mathrm{sys}}$ at the zero $z_0$. For a single-node perturbation at node $i$, the numerator reduces to $p_i = l_i r_i$. Defining $\mathcal{S}_{\text{sys}} = - (\mathbf{l}^{\mathrm{H}} \frac{\partial \mathbf{J}_{\mathrm{sys}}}{\partial s} \mathbf{r})^{-1}$ yields
\begin{equation} \label{eq:local_sens}
    \frac{\mathrm{d} z_0}{\mathrm{d} k_i} = p_i \cdot \mathcal{S}_{\text{sys}}.
\end{equation}

The factor $\mathcal{S}_{\text{sys}}$ is independent of the actuator location $i$. To determine its sign, consider a uniform gain $k$ applied at all nodes ($\mathbf{K} = k\mathbf{I}$). With the normalization $\mathbf{l}^{\mathrm{H}}\mathbf{r}=1$, summing \eqref{eq:local_sens} over all nodes gives
\begin{equation}
    \left( \frac{\mathrm{d} z_0}{\mathrm{d} k} \right)_{\text{uniform}} = \sum_{i=1}^{N} p_i \cdot \mathcal{S}_{\text{sys}} = \mathcal{S}_{\text{sys}}.
\end{equation}

This uniform feedback is equivalent to the admittance augmentation $\mathbf{Y}(s) \to \mathbf{Y}(s) + k\mathbf{I}$, so $z_0$ satisfies the characteristic equation
\begin{equation} \label{eq:char_eq_uniform}
    z_0^2 + 1 = \lambda_j \left( \mathbf{M}(k) \right)
\end{equation}
where $\mathbf{M}(k) = \mathbf{S}^{-1} (\mathbf{Y} + k\mathbf{I}) \overline{\mathbf{S}}^{-1} (\overline{\mathbf{Y}} + k\mathbf{I})$. Implicit differentiation at $k=0$ gives
\begin{equation}
    2 z_0 \frac{\mathrm{d} z_0}{\mathrm{d} k} = \frac{\mathrm{d} \lambda_j}{\mathrm{d} k}.
\end{equation}

Let $\mathbf{u}$ and $\mathbf{v}$ be the left and right eigenvectors of $\mathbf{M}(0)$. Evaluating $\left. \frac{\mathrm{d} \mathbf{M}}{\mathrm{d} k} \right|_{k=0} = 2\mathbf{S}^{-2} \ReOp(\mathbf{Y})$ yields
\begin{equation}
    \frac{\mathrm{d} \lambda_j}{\mathrm{d} k} = \frac{\mathbf{u}^{\mathrm{H}} \left( 2\mathbf{S}^{-2} \ReOp(\mathbf{Y}) \right) \mathbf{v}}{\mathbf{u}^{\mathrm{H}}\mathbf{v}}.
\end{equation}

Since $\ReOp(\mathbf{Y})$ is the network conductance matrix and is positive definite for any passive network, $\ReOp(\mathrm{d} \lambda_j / \mathrm{d} k) > 0$. Because $z_0 > 0$, it follows that $\ReOp(\mathcal{S}_{\text{sys}}) > 0$. Returning to \eqref{eq:local_sens},
\begin{equation}
    \ReOp\left( \frac{\mathrm{d} z_0}{\mathrm{d} k_i} \right) \propto \ReOp(p_i).
\end{equation}

\section{System Parameters}
\label{app:parameters}

\begin{table}[htbp]
\renewcommand{\arraystretch}{1.5} %
\centering
\caption{NETWORK LINE PARAMETERS OF THE MODIFIED IEEE 9-BUS SYSTEM (P.U.)}
\label{tab:network_parameters}
\begin{tabular}{ccc}
\toprule
$X_{15}=0.0411$ & $X_{45}=0.0086$ & $X_{78}=0.01065$ \\
\midrule
$X_{26}=0.0250$ & $X_{46}=0.0181$ & $X_{69}=0.00665$ \\
\midrule
$X_{37}=0.1510$ & $X_{58}=0.01065$ & $X_{79}=0.00665$ \\
\bottomrule
\end{tabular}
\end{table}

\end{document}